\newcounter{step}
\renewcommand{\epsilon}{\varepsilon}
\renewcommand{\mod}{\bmod}
\newcommand{\ccite}[1]{\citeauthor{#1} \cite{#1}}
\newtheorem{definition}{Definition}
\newtheorem{theorem}{Theorem}
\newtheorem{lemma}[theorem]{Lemma}
\newtheorem{assumption}{Assumption}
\begin{document}

% \begin{frontmatter}

\title{Improved Bounds for Rectangular Monotone Min-Plus Product and Applications}

\author{Anita Dürr\thanks{
{anita.durr@epfl.ch (École Polytechnique Fédérale de Lausanne, Switzerland)}
}}

\date{August, 2022}

\maketitle

\begin{abstract}
    In a recent breakthrough paper, \citeauthor{chi2022faster} (STOC'22)
    introduce an $\Tilde{O}(n^{\frac{3 + \omega}{2}})$ time algorithm to compute Monotone Min-Plus Product between two square matrices of dimensions $n \times n$ and entries bounded by $O(n)$. This greatly improves upon the previous $\Tilde{O}(n^{\frac{12 + \omega}{5}})$ time algorithm and as a consequence improves bounds for its applications. Several other applications involve Monotone Min-Plus Product between rectangular matrices, and even if \citeauthor{chi2022faster}'s algorithm seems applicable for the rectangular case, the generalization is not straightforward.
    In this paper we present a generalization of the algorithm of \citeauthor{chi2022faster} to solve Monotone Min-Plus Product for rectangular matrices with polynomial bounded values.
    We next use this faster algorithm to improve running times for the following applications of Rectangular Monotone Min-Plus Product: $M$-bounded Single Source Replacement Path, Batch Range Mode, $k$-Dyck Edit Distance and 2-approximation of All Pairs Shortest Path. We also improve the running time for Unweighted Tree Edit Distance using the algorithm by \citeauthor{chi2022faster}.
\end{abstract}

\paragraph{Keywords.}
Monotone Min-Plus Product; Bounded-Difference Min-Plus Product; Fine-Grained Complexity

\section{Introduction}

The \textbf{Min-Plus Product} $C$ between two $n \times n$ matrices $A$ and $B$ is defined as the $n \times n$ matrix $C = A \star B$ with entries $C_{i, j} = \min_{k \in [n]} \{A_{i, k} + B_{k, j}\}$ for all $i, j \in [n]$. The naive algorithm computes $C$ in $O(n^3)$ time while the currently fastest algorithm by \citet{fastAPSP, fastAPSP_journal} runs in time $n^3/2^{\Theta(\sqrt{\log n})}$. However, since the All Pairs Shortest Path (APSP) problem is equivalent to the Min-Plus Product, under the APSP Hypothesis there is no algorithm solving Min-Plus Product in truly subcubic running time of $O(n^{3 - \epsilon})$ for some $\epsilon > 0$. Nevertheless, it is sometimes possible to break that barrier if we assume some structure on the matrices. In their recent breakthrough paper, \citet{chi2022faster} show that if $B$ has monotone rows then one can solve the Min-Plus Product in time $\Tilde O(n^{\frac{3 + \omega}{2}})$\footnote{We use the notation $\Tilde{O}$ to hide a polylogarithmic factor.}, where $\omega$ is the exponent of fast matrix multiplication. This improves upon the previously best known algorithm by \citet{polak} for Square Monotone Min-Plus Product which had a running time of $\Tilde O (n^{\frac{12+\omega}{5}})$.

Fortunately, Min-Plus Product, as a fundamental problem, can be used to solve many other applications in which it is enough to assume that one of the matrices is monotone. Therefore the result of \citeauthor{chi2022faster} yields improvements for several applications. A direct consequence is for instance the improvement of algorithms solving the Language Edit Distance, RNA-folding and Optimum Stack Generation problems, which where reduced to square Monotone Min-Plus Product by \citet{rna_led_osg}. However there are several other applications of Monotone Min-Plus Product in which the matrices are rectangular: the Single Source Replacement Path (SSRP) with bounded edge-weights (\ccite{polak}), Batch Range Mode (\ccite{batch_range_mode}), $k$-Dyck Edit Distance (\ccite{dyck_arxiv}), as well as a 2-approximation of APSP (\ccite{apsp_approx}).
Even though the techniques used in \cite{chi2022faster} to solve Square Monotone Min-Plus Product seem to be applicable for rectangular matrices, a generalization of the algorithm is not straightforward and needs to be done carefully. In this paper we show the generalization for the rectangular case and then improve the running time for the above mentioned problems. We also improve the running time for Unweighted Tree Edit Distance, which was reduced to Square Monotone Min-Plus Product by \citet{treeED}, using the algorithm by \citet{chi2022faster}.

\begin{table*}[h!]
    \centering
    \begin{threeparttable}
    \caption{Improved bounds of the expected running time (in $\Tilde{O}$) of randomized algorithms solving applications of Monotone Min-Plus Product. The improvements are resulting from \cref{thm:fast_rect} and are further discussed in Section \ref{sec:applications}. When possible we express the bound as a function of $\omega$, otherwise we use the current best bound on fast matrix multiplication \cite{gall_square} and fast rectangular matrix multiplication \cite{gall_rect}.}
    \label{fig:table_applications}
    \begin{tabular}{ @{}lll@{} }
     \toprule
     Problem & Previous bound & Improved bound \\
     \midrule
        $M$-bounded SSRP \cite{polak} & $M^{\frac{5}{17 - \omega}} n^{\frac{36 - 7 \omega}{17 - 4\omega}}$ or $M^{0.8043} n^{2.4957}$  & $M^{\frac{2}{5 - \omega}} n^{\frac{9 - \omega}{5 - \omega}}$ or $M^{0.8825} n^{2.4466}$ \\
        Batch Range Mode\tnote{$\dagger$}~\cite{batch_range_mode, Gao_He_batch} & $n^{\frac{18 + 2\omega}{13 + \omega}} \leq n^{1.4796}$ & $n^{\frac{6 + 2\omega}{5 + \omega}} \leq n^{1.4575}$ \\
        $k$-Dyck Edit Distance \cite{dyck_arxiv} & $n + k^{4.7820}$ & $n + k^{\frac{5}{2} + \omega(1/2)} \leq n + k^{4.5442}$ \\
        $2$-approximation APSP \cite{apsp_approx} & $n^{2.2867}$ & $n^{2.2593}$ \\
        Unweighted Tree Edit Distance \cite{treeED} & $n\cdot m^{1.9541}$ & $n \cdot m^\frac{3\omega + 7}{\omega + 5} \leq n \cdot m^{1.9149}$ \\
     \bottomrule
    \end{tabular}
    \footnotesize
    \begin{tablenotes}
    \item[$\dagger$] The first version of this paper erroneously claimed a lower improved bound.
    \end{tablenotes}
    \end{threeparttable}
\end{table*}

\paragraph{Definitions}
We restrict ourselves to integer matrices only. A matrix $X$ is said to be \textbf{monotone} if each row is a non-decreasing sequence and every entry is non-negative and bounded by $O(n^\mu)$ for some $\mu \geq 0$.
When the monotonicity holds for the columns we call the matrix column-monotone.
We also consider Min-Plus Product between rectangular matrices $A$ and $B$ of dimensions $n \times n^\beta$ and $n^\beta \times n$ for some $\beta \geq 0$. Define $\omega(\beta)$ to be the \textbf{exponent of fast rectangular matrix multiplication} between two matrices of dimensions $n \times n^\beta$ and $n^\beta \times n$. So $\omega = \omega(1)$. We focus on the \textbf{Rectangular Monotone Min-Plus Product} parameterized by $(\beta, \mu)$, which is the Min-Plus Product between matrices $A$ and $B$ of dimensions $n \times n^\beta$ and $n^\beta \times n$ where $B$ is monotone with non-negative entries bounded by $O(n^\mu)$.

\paragraph{Results}
Our contribution is twofold. First we provide
an algorithm computing the Rectangular Monotone Min-Plus Product in time $\Tilde O(n^{\frac{1 + \beta + \mu + \omega(\beta)}{2}})$, as stated in \cref{thm:fast_rect}. We remark that the algorithm directly generalizes the algorithm provided by \citet{chi2022faster} for the special setting $\beta = \mu = 1$ and that in that case one falls back on their results. Next, we detail how the result of \cref{thm:fast_rect} is used to improve running times of $M$-bounded SSRP, $k$-Dyck Edit Distance, 2-approximation of APSP. Indeed, in those applications the dependency on the Monotone Min-Plus Product is not direct but depends on some parameters that need to be optimized according to the new running time given by \cref{thm:fast_rect}. 
Moreover, we explain how a small modification to the Batch Range Mode algorithm of \citet{batch_range_mode} allows using the (square) Monotone Min-Plus Product of \citet{chi2022faster} to obtain an improved running time.
We also simplify and improve the running time for Unweighted Tree Edit Distance. All new bounds are summarized in \cref{fig:table_applications}. Additionally, we discuss one consequence of \citeauthor{chi2022faster}'s algorithm for the lower bound of SSRP algorithms.

\begin{restatable}{theorem}{genalgo}
    \label{thm:fast_rect}
    Let $\beta, \mu$ be non-negative real numbers. Let $A$ be an $n \times n^\beta$ integer matrix and $B$ an $n^\beta \times n$ integer monotone matrix with non-negative entries bounded by $O(n^\mu)$. Then the Min-Plus Product $A \star B$ can be computed in $\Tilde{O}(n^{\frac{1 + \beta + \mu + \omega(\beta)}{2}})$ time in expectation.
\end{restatable}

A similar result for \cref{thm:fast_rect} and the improved running time for the $k$-Dyck Edit Distance problem was obtained independently and in parallel by \ccite{dyck_soda}.

\paragraph{Monotone and Bounded-Difference Min-Plus Product}
Strict\-ly speaking, except for SSRP and Batch Range Mode, the above mentioned applications don't use Monotone Min-Plus Product but Bounded-Difference Min-Plus Product. A matrix $X$ is \textbf{$\delta$-bounded-difference} (or simply bounded-difference if $\delta$ is a constant) if $|X_{i, j} - X_{i, j+1}| \leq \delta$ and $|X_{i, j} - X_{i + 1, j}| \leq \delta$ for all $i, j$. One can reduce a $\delta$-bounded difference matrix $X$ to a monotone matrix $X'$ in quadratic time by setting $X'_{i, j} = X_{i, j} + \delta j - X_{1, 1}$ for all $i, j$. In that case elements of $X'$ are non-negative and bounded by $O(n\delta)$.
In most of the above mentioned applications, a Min-Plus Product is performed between two bounded-difference matrices. This means that a stronger structure is assumed on the matrices than just the monotonicity of one matrix and this could allow for faster algorithms. In fact, before the recent result of \citet{chi2022faster} (and its generalization via \cref{thm:fast_rect}), the best known algorithm by \citet{chi_bdd_diff} for Square Bounded-Difference Min-Plus Product had a running time of $\Tilde O(n^{2.779})$, while the best known algorithm by \citet{polak} for square Monotone Min-Plus Product (when $\mu=1$) had a running time of $\Tilde O(n^{2.8653})$. However \cref{thm:fast_rect} improves on both of those bounds and thus improves the running time for both Monotone and Bounded-Difference Min-Plus Product.

\paragraph{Related work}
It is worth mentioning that other problems such as Dynamic Range Mode (\ccite{polak}), 3SUM+ and 3SUM+ in preprocessed universe (\ccite{lewenstein_chan}) were solved alongside Min-Plus Product. \citeauthor{lewenstein_chan} reduce Min-Plus Convolution to 3SUM+ and a better algorithm for Min-Plus Convolution thus doesn't entail a better algorithm for 3SUM+. However the tools and techniques used in \ccite{lewenstein_chan} for Min-Plus Convolution are adapted to 3SUM+ and 3SUM+ in a preprocessed universe to obtain a faster algorithm. Similarly, in \ccite{polak} the Min-Plus Product algorithm is adapted for Dynamic Range Mode. In both those work Min-Plus Product is not used as a black-box, hence our result does not directly imply an improvement.

Finally, we would like to mention that \citeauthor{chi2022faster} also solve Monotone Min-Plus Convolution in $\Tilde O(n^{1.5})$ time with the same technique as in \cite{chi2022faster} for Monotone Min-Plus Product. This is a major improvement upon the previously $\Tilde{O}(n^{1.869})$ time algorithm by \citet{lewenstein_chan}. Since Binary Jumbled Indexing (or Histogram Indexing) can be reduced to Monotone Min-Plus Convolution in quadratic time (see \citet{lewenstein_chan}), this also directly yields an $\Tilde O(n^{1.5})$ time algorithm for it. Additionally, earlier this year \citet{knapsack} used \citeauthor{chi2022faster}'s result to get new running times for Knapsack.

\section{Rectangular Monotone Min Plus Product}
\label{sec:algo}
In this section we reformulate and generalize the algorithm presented by \citet{chi2022faster} to compute the Monotone Min-Plus Product for arbitrary $\beta, \mu \geq 0$, thus proving \cref{thm:fast_rect}. We recall that all the ideas of the presented algorithm are from \citet{chi2022faster} and that the only novelty is to extend the setting of the matrices.
% In fact, we highlight the main differences for the reader familiar with the algorithm from \citet{chi2022faster}. As the elements of $B$ are bounded by $O(n^\mu)$, the number of intervals of identical elements in the quotient matrix of $B$ becomes $O(n^{\mu-\alpha})$ and not $O(n^{1-\alpha})$. Since the
% for a reader familiar with the algorithm from \citet{chi2022faster}, the main difference is that the bound of the elements of $B$ is generalized from $O(n)$ to $O(n^\mu)$, where $\mu \geq 0$ is some parameter. As a consequence the number of intervals of identical elements in the quotient matrices
We also like to note that the following algorithm can be adapted to solve Min-Plus Product when $B$ is column-monotone instead of row-monotone by using the same method as explained by \citeauthor{chi2022faster} in \cite[Appendix A]{chi2022faster}. The running time is slightly different and we sketch a proof in \cref{sec:BMMP_col}. Also, since $(A \star B)^T = B^T \star A^T$, this algorithm  and the one described in \cref{sec:BMMP_col} can be used if $A$ is column- or row-monotone respectively.

The algorithm uses the \textbf{segment tree} data structure which allows for a sequence of integers $S = \{s_1, \dots s_n\}$ to perform the following update and query operations in $O(\log n)$ deterministic worst-case time. An update on $S$ is defined by an interval $[i, j] \subset [n]$ and an integer $u$ and updates $s_l \gets \min\{s_l, u\}$ for each $l \in [i, j]$. A query operation returns the current value of an arbitrary element $s_l$.
We will also \textbf{multiply matrices of polynomials}: if $A$ and $B$ are matrices of polynomials of degree at most $d$ and of dimensions $n \times n^\beta$ and $n^\beta \times n$, then we can compute the product $A \cdot B$ in $\Tilde O(d n^{\omega(\beta)})$ time using Fast Fourier Transform and fast matrix multiplication.

\paragraph{Overview}
Let $A$ and $B$ be two matrices of dimensions $n \times n^\beta$ and $n^\beta \times n$ such that $B$ is monotone and with entries bounded by $O(n^\mu)$. The idea of the algorithm is to divide the matrix $C = A \star B$ into two matrices $C^*$ and $\Tilde{C}$ such that $C^*_{i, j} = \lfloor C_{i, j} / p \rfloor$ and $\Tilde{C}_{i, j} = C_{i, j} \mod p$ for some integer $p$ that we describe later. This way, the original value $C_{i, j}$ can be computed by $p \cdot C^*_{i, j} + \Tilde{C}_{i, j}$.
To efficiently compute the quotient matrix $C^* = \lfloor C /p \rfloor$, we will construct matrices $A^* = \lfloor A /p \rfloor$ and $B^* = \lfloor B /p \rfloor$ and use a combinatorial approach to determine the entries of $C^*$.

The computation of the remainder matrix $\Tilde{C} = (C \mod p)$ is more complex. We will proceed iteratively, each step computing one bit of $\Tilde{C}$. For that, we define $h$ to be such that $2^{h-1} \leq p < 2^h$ and for each $l = h, h-1, \dots, 0$ let $A^{(l)} = \left\lfloor \frac{A \mod p}{2^l} \right\rfloor$ and $B^{(l)} = \left\lfloor \frac{B \mod p}{2^l} \right\rfloor$. We then construct a matrix $C^{(l)}$ approximating $\left\lfloor \frac{C \mod p}{2^l} \right\rfloor$ such that in the end $C^{(0)} = \Tilde{C}$. Note that $C^{(l)}$ will not be the Min-Plus Product $A^{(l)} \star B^{(l)}$, as this would be too costly, but is only an approximation of it. $C^{(l)}$ is computed using $C^{(l+1)}, A^{(l + 1)}, B^{(l + 1)}, A^{(l)}$ and $B^{(l)}$. \cref{lemma:filter} ensures that the correct $C^{(l)}_{i, j}$ is equal to the sum $A^{(l)}_{i, k} + B^{(l)}_{k, j}$ for some $k$ satisfying both conditions:
\begin{itemize}
    \item $A^{(l + 1)}_{i, k} + B^{(l + 1)}_{k, j} = C^{(l+1)}_{i, j} + b \text{ for some } b \in \{-10, \dots, 10\}$
    \item $A_{i, k}^{*} + B^*_{k, j} = C^*_{i, j}$
\end{itemize}
The algorithm first focuses on the first condition, and then removes all candidates that don't satisfy the second condition. In fact,  terms  $A^{(l)}_{i, k} + B^{(l)}_{k, j}$ satisfying the first conditio can be filtered out by using a commun polynomial multiplication trick: construct two matrices of polynomials $A^p, B^p$ on two variables $x$ and $y$ such that for any $i \in [n], k \in [n^\beta], j \in [n]$
$$A^p_{i, k} = x^{A_{i, k}^{(l)}} \cdot y^{A_{i, k}^{(l + 1)}}$$
and
$$B^p_{k, l} = x^{B_{k, j}^{(l)}} \cdot y^{B_{k, j}^{(l + 1)}}.$$
When computing the (standard) product $C^p = A^p \cdot B^p$, we will get for any $i, j \in [n]^2$ $$
C^p_{i, j} = \sum_{k \in [n^\beta]} x^{A_{i, k}^{(l)} + B_{k, j}^{(l)}} \cdot y^{A_{i, k}^{(l + 1)} + B_{k, j}^{(l + 1)}}.
$$
So to filter out terms $A_{i, k}^{(l)} + B_{k, j}^{(l)}$ satisfying the first condition of \cref{lemma:filter}, we can compare the $y$-degrees of $C^p_{i, j}$ to $C^{(l+1)}_{i, j} + b$ for every offset $b \in \{-10, \dots, 10\}$ and keep the corresponding $x$-degrees. Now it might be the case that those terms don't satisfy the second condition, i.e.\ ~are such that $A_{i, k}^{*} + B^*_{k, j} \neq C^*_{i, j}$. This is why we construct for all $b \in \{-10, \dots, 10\}$ and $l \in \{h, \dots, 0\}$ the set $T_b^{(l)}$ of all terms such that the first condition of Lemma \ref{lemma:filter} holds, but not the second. We then subtract the erroneous terms selected by $T_b^{(l)}$ to keep only the terms of Lemma \ref{lemma:filter}. In order to efficiently compute the $T_b^{(l)}$ we will use previously computed $T_b^{(l+1)}$. To bound the size of the set  $T_b^{(l)}$ we will use the monotonicity of $B^{(l)}, C^{(l)}$ but we will also need the integer $p$ to be a prime number (see \cref{lemma:bound_Tb}).

Finally, a last detail to address is the running time for computing $C^p$.
The $x$ and $y$-degrees are bounded by
$O(n^{\alpha})$ and play a multiplicative factor in the computation of $C^p$.
We can improve this by replacing the $x$-degrees by $A_{i, k}^{(l)} - 2A_{i, k}^{(l+1)}$ and $B_{k, j}^{(l)} - 2B_{k, j}^{(l+1)}$ respectively. Then the $x$-degree becomes 0 or 1, and the computation of $C^p$ takes $\Tilde O(n^{\alpha} \cdot n^{\omega(\beta)})$ time.
\newline

% \textcolor{red}{In short, here are the points in the proof that differ from Chi et al.}

% To a reader familiar with the algorithm of \citet{chi2022faster}
% The main differences to the algorithm of \citet{chi2022faster} is the dimension of the matrices.

% We use the same notation as \citet{chi2022faster}. We highlight the main differences with the algorithm of \citet{chi2022faster}.
% or a reader familiar with the algorithm of \citet{chi2022faster},

\subsection{Algorithm}
We now describe the full algorithm in detail. Let $A$ and $B$ be matrices as in Theorem \ref{thm:fast_rect} of dimensions $n \times n^\beta$ and $n^\beta \times n$ and such that $B$ is monotone with $O(n^\mu)$ bounded entries for some reals $\beta, \mu \geq 0$. Following the same reasoning as in \cite{chi2022faster}, we can assume that all entries of $A$ are non-negative and at most $O(n^\mu)$. As a consequence, the matrix $C = A \star B$ is row-monotone with non-negative entries at most $O(n^\mu)$.

Without loss of generality, we can assume that $n$ is a power of 2. Define a constant parameter $\alpha \in (0, \mu)$ to be optimized later. Pick uniformly at random a prime number $p$ in the range $[40n^\alpha, 80n^\alpha]$ and define the integer $h$ such that $2^{h-1} \leq p < 2^h$. Similarly as in \cite{chi2022faster}, we can make the following assumption (see \cref{lemma:ass}).

\begin{assumption}\label{ass:p/3}
For every $i, j$ either $(A_{i, j}~\mod~p)~<~p/3$ or $A_{i, j}~=~+\infty$. For every $i, j$ $(B_{i, j}~\mod~p)~<~p/3$ and each row of $B$ is non-decreasing.
\end{assumption}

The idea of the algorithm is to compute two matrices $C^*$ and $\Tilde{C}$ such that $C^*_{i, j} = \lfloor C_{i, j} / p \rfloor$ and $\Tilde{C}_{i, j} = C_{i, j} \mod p$ for each $i, j$. Then, to recover the original value $C_{i, j}$ simply compute $p \cdot C^*_{i, j} + \Tilde{C}_{i, j}$. We compute $C^*$ using a combinatorial approach, while we compute $\Tilde C$ iteratively, computing each bit at a time.

\paragraph{Compute the quotient matrix $\lfloor C /p \rfloor$} Define $A^*$ and $B^*$ as $A^*_{i, j} = \lfloor A_{i, j} / p \rfloor$ if $A_{i, j}$ is finite, otherwise $A^*_{i, j} = +\infty$, and $B^*_{i, j} = \lfloor B_{i, j} / p \rfloor$. Then, by Assumption \ref{ass:p/3}, the product $C^* = A^* \star B^*$ is such that if $C_{i, j}$ is finite then $C^*_{i, j} = \lfloor C_{i, j} / p \rfloor$.

Note that $B^*$ is row-monotone with entries bounded by $O(n^{\mu - \alpha})$. Thus in each row $k$ of $B^*$ we can define at most $O(n^{\mu - \alpha})$ intervals $[j_0, j_1] \subset [n]$ such that $B^*_{k, j_0 -1} \neq B^*_{k, j_0} = B^*_{k, j_0 + 1} = \dots = B^*_{k, j_1} \neq B^*_{k, j_1 + 1}$. To compute $C^*$ we first initialize each entry to $+\infty$. We then loop over each $i \in [n]$, each $k \in [n^\beta]$ and each interval $[j_0, j_1] \subset [n]$ as defined above for the $k$-th row of $B^*$ and update $C^*_{i, j} \gets \min \{C^*_{i, j}, A^*_{i, k} + B^*_{k, j_0}\}$ for each $j \in [j_0, j_1]$. This operation is correct since $B^*_{k, j} = B^*_{k, j_0}$ for every $j \in [j_0, j_1]$. Updating all elements $C^*_{i, j}$ for $j \in [j_0, j_1]$ can be done in logarithmic time using a segment tree. Hence the total running time to compute $C^*$ is the range of $i$ times the range of $k$ times the number of intervals in each row of $B^*$, i.e.\ $\Tilde{O}(n^{1 + \beta  + \mu - \alpha})$.

\paragraph{Compute the remainder matrix $(C \mod p)$} For every $0 \leq l \leq h$, construct $n \times n^\beta$ and $n^\beta \times n$ matrices $A^{(l)}$ and $B^{(l)}$ as $A^{(l)}_{i, j} = \left\lfloor \frac{A_{i, j} \mod p}{2^l} \right\rfloor$ if $A_{i, j}$ is finite, $A^{(l)}_{i, j} = +\infty$ otherwise and  $B^{(l)}_{i, j} = \left\lfloor \frac{B_{i, j} \mod p}{2^l} \right\rfloor$. We will iteratively construct an $n \times n$ matrix $C^{(l)}$ approximating $\left\lfloor \frac{C_{i, j} \mod p}{2^l} \right\rfloor$ if $C_{i, j}$ is finite. $C^{(l)}$ isn't defined directly but constructed such that it satisfies the following properties if $C_{i, j}$ is finite:
\begin{enumerate}[(1)]
    \item $\left\lfloor \frac{C_{i, j} \mod p \ - \ 2(2^l -1)}{2^l} \right\rfloor \leq C_{i, j}^{(l)} \leq \left\lfloor \frac{C_{i, j} \mod p \ + \ 2(2^l -1)}{2^l}  \right\rfloor$
    \item for $j_0 <j_1$, if $C^*_{i, j_0} = C^*_{i, j_1}$ then $C^{(l)}_{{i, j_0}}, \dots, C^{(l)}_{{i, j_1}}$ are monotonically non-decreasing
\end{enumerate}
Note that $C^{(l)}$ is not necessary equal to $A^{(l)} \star B^{(l)}$. We iteratively calculate matrix $C^{(l)}$ for $l = h, h-1, \dots, 0$ satisfying the above properties. Since $\left\lfloor \frac{C_{i, j} \mod p}{2^l} \right\rfloor$ can be seen as the approximation of $C_{i, j} \mod p$ up to the $l$-th bit, at each iteration we refine our approximation and  get in the end $\Tilde{C} = C^{(0)}$. To compute $C^{(l)}$ from $C^{(l+1)}$ we look for the minimum value of $A_{i, k}^{(l)} + B_{k, j}^{(l)}$ such that $k$ satisfies both conditions:
\begin{itemize}
    \item $A^{(l + 1)}_{i, k} + B^{(l + 1)}_{k, j} = C^{(l+1)}_{i, j} \ +\ b$ for some $b \in \{-10, \dots, 10\}$
    \item $A_{i, k}^{*} + B^*_{k, j} = C^*_{i, j}$
\end{itemize}
By \cref{lemma:filter} this ensures that $C^{(l)}$ satisfies both properties (1) and (2).

We will filter out terms according to the first condition and then remove the ones that don't satisfy the second condition. We explain later how the filtering according to the first condition is done and for now define the set of \textit{erroneous terms}, which are the terms satisfying the first condition but not the second.
For that, let us first make a series of observations.
All elements in $A^{(l)}, B^{(l)}, C^{(l)}$ are infinite or non-negative integers at most $O(n^{\alpha}/2^l)$. Furthermore, by property (2) of $C^{(l)}$, and since $B$ is row-monotone and elements of $B$ and $C$ are bounded by $O(n^\mu)$, every row of $B^{(l)}$ and $C^{(l)}$ is composed of $O(n^\mu/2^l)$ intervals of equal elements. Also, each pair of rows of $B^{(l)}, C^{(l)}$ can be divided into $O(n^\mu/2^l)$ segments, where we define a segment as follow:
\begin{definition}
For $i, j_0, j_1 \in [n], k \in [n^\beta]$ and $j_0 \leq j_1$ define a segment w.r.t $B^{(l)}$ and $C^{(l)}$ to be the tuple $(i, k, [j_0, j_1])$ such that for all $j \in [j_0, j_1]$ $B^{(l)}_{k, j} = B^{(l)}_{k, j_0}$, $B^{*}_{k, j} = B^{*}_{k, j_0}$, and $C^{(l)}_{i, j} = C^{(l)}_{i, j_0}$, $C^{*}_{i, j} = C^{*}_{i, j_0}$.
\end{definition}

\noindent We can then define the set of erroneous terms $T_b^{(l)}$ for each offset $b \in \{-10, \dots, 10\}$ and every $l \in \{0, \dots, h\}$ as the set of segments $(i, k, [j_0, j_1])$
w.r.t  $B^{(l)}, C^{(l)}$ such that
$A_{i, k} < +\infty,\ A^{(l)}_{i, k} + B^{(l)}_{k, j_0} = C^{(l)}_{i, j_0} + b$ and
$A^{*}_{i, k} + B^{*}_{k, j_0} \neq C^{*}_{i, j_0}$.
We will show in Lemma \ref{lemma:bound_Tb} that the size of $T_b^{(l)}$ can be bounded by $\Tilde{O}(n^{1 + \beta + \mu -\alpha})$.

\paragraph{First iteration} The algorithm starts with $l = h$, so $A^{(l)}$, $B^{(l)}$, $C^{(l)}$ are zero matrices in the first iteration. $T_b^{(h)}$ is an empty set for $b \neq 0$ and $T_0^{(h)}$ is the set of segments $(i, k, [j_0, j_1])$ (w.r.t $B^{(h)}, C^{(h)}$) such that $A_{i,k} < +\infty$ and $A^{*}_{i, k} + B^{*}_{k, j_0} \neq C^{*}_{i, j_0}$. Since the number of segments in each pair of rows of $B^{(h)}, C^{(h)}$ is at most $O(n^\mu/p) = O(n^{\mu - \alpha})$ and since the number of pairs of rows of $B^{(h)}, C^{(h)}$ is $O(n^{1 + \beta})$, we can bound the size of $T_{b}^{(h)}$ for every $b \in \{-10, \dots, 10\}$ by $O(n^{1 + \beta + \mu - \alpha})$. Hence the first iteration does not take more than $O(n^{1 + \beta + \mu - \alpha})$ time.
\newline

Each following iteration of $l = h-1, \dots, 0$ consists of three phases: a polynomial matrix multiplication using $A^{(l)}$, $A^{(l+1)}$, $B^{(l)}$ and $B^{(l+1)}$ that allows to filter out candidates for $C^{(l)}$; a subtraction of the erroneous terms $T_b^{(l+1)}$; and finally the computation of $T_b^{(l)}$ using $T_b^{(l+1)}$. We describe the three phases for a fixed $l \in \{0, \dots, h-1\}$.

\paragraph{Multiply matrices of polynomials}
Construct two polynomial matrices $A^p$ and $B^p$ on variables $x, y$ as $A^p_{i, k} = 0$ if $A_{i, k}$ is infinite, otherwise:
$$A^p_{i, k} = x^{A^{(l)}_{i, k} - 2A^{(l + 1)}_{i, k}} \cdot y^{A^{(l+1)}_{i, k}}$$ and $$B^p_{k, j} = x^{B^{(l)}_{k, j} - 2B^{(l + 1)}_{k, j}} \cdot y^{B^{(l+1)}_{k, j}}.$$
Compute the standard matrix multiplication $C^p = A^p \cdot B^p$. Since $A^{(l)}_{i, k} - 2A^{(l + 1)}_{i, k}$ and $B^{(l)}_{k, j} - 2B^{(l + 1)}_{k, j}$ are 0 or 1, the $x$-degree is bounded by 1. The $y$-degree is bounded by $O(p)= O(n^\alpha)$. Hence computing $C^p$ takes $\Tilde{O}(n^{\omega(\beta) + \alpha})$ time.

\paragraph{Filter candidates and subtract erroneous terms}
The previous computation of $C^p$ allows us to filter out candidate terms $A^{(l)}_{i, k}$ and $B^{(l)}_{k, j}$ that satisfy the first condition for $C^{(l)}_{i, j}$, i.e.\ $A^{(l + 1)}_{i, k} + B^{(l + 1)}_{k, j} = C^{(l+1)}_{i, j} + b$ for some $b \in \{-10, \dots, 10\}$.
Indeed we have that for every $i, j \in [n]^2$:
$$C^p_{i, j} = \sum_{\substack{k \in [n^\beta] \\ A_{i,k} < + \infty}} x^{A^{(l)}_{i, k} + B^{(l)}_{k, j}- 2A^{(l + 1)}_{i, k} - 2B^{(l + 1)}_{k, j}} \cdot y^{A_{i, k}^{(l+1)} + B_{k, j}^{(l+1)}}.$$
If $C^p_{i, j} = 0$, set $C^{(l)}_{i, j} = +\infty$. Otherwise, build the following polynomial for each $b \in \{ -10, \dots, 10\}$:
$$
C^p_{i, j, b} (x) = \sum_{\substack{\lambda x^c \cdot y^d \text{ is a term in } C^p_{i, j} \\ d = C^{(l+1)}_{i, j} + b}} \lambda x^c.
$$
This construction takes $\Tilde{O}(n^{1+\beta+\alpha})$ time for all $b \in \{-10 ,\dots, 10\}$ and $i, j \in [n]^2$.
We then compute the polynomial corresponding to the erroneous terms:
$$
R^p_{i, j, b} (x) = \sum_{\substack{j \in [j_0, j_1] \\ (i, k, [j_0, j_1]) \in T_b^{(l+1)}}} x^{A^{(l)}_{i, k} + B^{(l)}_{k, j}- 2A^{(l + 1)}_{i, k} - 2B^{(l + 1)}_{k, j}}
.$$
Since $B^{(l)}_{k, j} - 2B^{(l + 1)}_{k, j}$ is 0 or 1, there can be at most two different segments $(i, k, [j_0, j_1]) \in T_b^{(l+1)}$ corresponding to $B^{(l)}_{k, j}$. This means that for a fixed $b \in \{-10, \dots, 10\}$, we can construct the polynomials $R^p_{i, j, b} (x)$ for all $i, j \in [n]^2$ by considering each segment $(i, k, [j_0, j_1])$ of $T_b^{(l+1)}$ and adding the corresponding term to at most two different $R^p_{i, j, b} (x)$ and $R^p_{i', j', b} (x)$. This takes $\Tilde{O}(|T_b^{(l+1)}|)$ time for each offset $b$.

% hence at most two different $R^p_{i, j, b} (x)$ per segment of $T_b^{(l+1)}$. In the end for each $b \in \{-10, \dots, 10\}$ computing $R_{i, j, b}^p (x)$ for all $i, j \in [n]^2$ takes $\Tilde{O}(|T_b^{(l+1)}|)$ time.

Finally, set $C^{(l)}_{i, j} = \min_{b \in \{-10, \dots, 10\}} \{s_{i, j, b} + 2(C^{(l+1)}_{i, j} + b)\}$, where $s_{i, j, b}$ denotes the smallest degree of $x$ monomials in the difference $C^p_{i, j, b} (x) - R^p_{i, j, b} (x)$. This phase takes $\Tilde{O}(n^{1 + \beta + \alpha} + |T_b^{(l + 1)}|)$ time in total, which by Lemma \ref{lemma:bound_Tb} is $\Tilde{O}(n^{1 + \beta + \mu - \alpha})$.

\subparagraph{Computing $T_{b}^{(l)}$ from $T_{b}^{(l + 1)}$.}
By Lemma \ref{lemma:bound_Cl}, both $B^{(l)}_{k, j} - 2B^{(l + 1)}_{k, j}$ and $C^{(l)}_{i, j} - 2C^{(l + 1)}_{i, j}$ are between 0 and $O(1)$. So each segment w.r.t.\ $B^{(l+1)}$, $C^{(l+1)}$ contains at most~$O(1)$ segments w.r.t.\ $B^{(l)}$, $C^{(l)}$. Furthermore, by \cref{lemma:bigcup}, we have $\bigcup\limits_{b=-10}^{10} T_b^{(l)} \subset \bigcup\limits_{b=-10}^{10} T_b^{(l + 1)}$. So to construct $T_{b}^{(l)}$, we need to look at each sub-segment of each segment in $\bigcup\limits_{b=-10}^{10} T_b^{(l + 1)}$ and put it in the set $T_{b}^{(l)}$ it belongs to. Each segment in $T_b^{(l + 1)}$ can be split in at most $O(1)$ sub-segments and we can use binary search to find the splitting points. Hence the construction of $T_{b}^{(l)}$ for all $b \in \{-10, \dots, 10\}$, takes $\Tilde{O}(|\bigcup\limits_{b=-10}^{10} T_b^{(l + 1)} | )$ time, which by Lemma \ref{lemma:bound_Tb} is $\Tilde{O}(n^{1 + \beta + \mu - \alpha})$.

\paragraph{Total running time} The expected running time of the algorithm is bounded by $\Tilde{O}(n^{1 + \beta + \mu - \alpha} + n^{\omega(\beta) + \alpha})$. Setting $\alpha = \frac{1 + \beta + \mu - \omega(\beta)}{2}$ we equalize both terms and get $\Tilde{O}(n^{\frac{1 + \beta + \mu + \omega(\beta)}{2}})$.

\subsection{Proof of correctness}\label{sec:algo_corr}

\begin{lemma}\label{lemma:ass}
The computation of $A\star B$ where $A$ and $B$ are from \cref{thm:fast_rect} can be done by computing a constant number of Min-Plus Products $A^i \star B^i$ of matrices  $A^i, B^i$ satisfying Assumption \ref{ass:p/3}.
\end{lemma}

The proof of analogous Lemma 3.4 in \cite{chi2022faster} applies verbatim. The idea is to construct three copies of A and three copies of B containing the values respectively in $[0, p/3)$, $[p/3, 2p/3)$ and $[2p/3, p)$ when taking the modulo $p$. We can then remove an appropriate offset to get values modulo $p$ at most $p/3$, while maintaining monotonicity of rows in the copies of $B$.

\begin{lemma}\label{lemma:bound_Tb}
In expectation, for every $b \in \{-10, \dots, 10\}$ and every $l \in \{0, \dots, h\}$, $T_b^{(l)}$ contains $\Tilde{O}(n^{1 + \beta + \mu - \alpha})$ segments.
\end{lemma}

\begin{proof}
Assume that $2^l < p /100$ and consider a segment $(i, k, [j_0, j_1])$ w.r.t $B^{(l)}, C^{(l)}$. Take $j \in [j_0, j_1]$ where $A_{i, k}$ is finite and $A_{i, k}^* + B^*_{k, j} \neq C^*_{i, j}$. By Assumption \ref{ass:p/3} we have that $(C_{i, j} \mod p) < 2p /3$, hence $|A_{i, k} + B_{k, j} - C_{i, j}| \geq p/3$. Indeed, if $A^*_{i, k} + B^*_{k, j} \geq C_{i, j}^* + 1$ then

\begin{align*}
\frac{A_{i, k}}{p} + \frac{B_{k, j}}{p} &\geq \left\lfloor \frac{A_{i, k}}{p}\right\rfloor + \left\lfloor\frac{B_{k, j}}{p}\right\rfloor \geq \left\lfloor\frac{C_{i, j}}{p}\right\rfloor + 1 \geq \frac{C_{i, j}}{p} - \frac{2}{3} + 1 = \frac{C_{i, j}}{p} + \frac{1}{3},
\end{align*}
so $A_{i, k} + B_{k, j} \geq C_{i, j} + p /3$. Similarly if $A^*_{i, k} + B^*_{k, j} \leq C_{i, j}^* - 1$ then
\begin{align*}
\frac{A_{i, k}}{p} - \frac{1}{3} + \frac{B_{k, j}}{p} - \frac{1}{3} &\leq \left\lfloor \frac{A_{i, k}}{p}\right\rfloor + \left\lfloor\frac{B_{k, j}}{p}\right\rfloor \leq \left\lfloor\frac{C_{i, j}}{p}\right\rfloor - 1 \leq \frac{C_{i, j}}{p} - 1.
\end{align*}
So we have $|A_{i, k} + B_{k, j} - C_{i, j}| \geq p/3$ in both cases.

We want to bound the probability that $(i, k, [j_0, j_1])$ appears in $T_b^{(l)}$, i.e.\ the probability that
$
\left\lfloor \frac{A_{i, k} \mod p}{2^l} \right\rfloor + \left\lfloor \frac{B_{k, j} \mod p}{2^l} \right\rfloor = C^{(l)}_{i, j} + b
$
holds, which is
$$
-4 \leq \frac{A_{i, k} \mod p}{2^l} + \frac{B_{k, j} \mod p}{2^l} - \frac{C_{i, j} \mod p}{2^l} - b \leq 4.
$$
Let $C_{i, j} = A_{i, q} + B_{q, j}$. So if $(i, k, [j_0, j_1]) \in T_b^{(l)}$ then $(A_{i, k} + B_{k, j} - A_{i, q} - B_{q, j}) \mod p \in [2^l (b - 4), 2^l(b + 4)]$. This means that $A_{i, k} + B_{k, j} - A_{i, q} - B_{q, j}$ is congruent modulo $p$ to one of the $O(2^l)$ remainders. It holds that for all remainder $r \in [2^l(b -1, 2^l(b+4))]$, $(|b| \leq 10)$,
$$
|r| \leq 14 \cdot 2^l < p/6 \leq \frac{1}{2}|A_{i, k} + B_{k, j} - A_{i, q} - B_{q, j}|.
$$
If $A_{i, k}, A_{i, q}$ are finite and $B_{k, j}, B_{q, j}$ are from the original $B$ (see Lemma \ref{lemma:ass}), $|A_{i, k} + B_{k, j} - A_{i, q} - B_{q, j} -r|$ is a positive number bounded by $O(n^\mu)$. The probability $P$ that $p \in [40n^\alpha, 80n^\alpha]$ divides $A_{i, k} + B_{k, j} - A_{i, q} - B_{q, j} -r$ is the quotient of the number of divisors of
$A_{i, k} + B_{k, j} - A_{i, q} - B_{q, j} -r$ over the number of primes in $[40n^\alpha, 80n^\alpha]$. There are at most $O(\log n^\mu)$ divisors of a $O(n^\mu)$ number. By the prime number Theorem (\ccite{jameson2003prime}), there is at least $\Omega(n^\alpha/ \log n^\alpha)$ prime numbers in $[40n^\alpha, 80n^\alpha]$.  Hence $P = O(\frac{\log n^\mu \log n^\alpha}{n^\alpha}) = \Tilde{O}(n^{-\alpha})$.

In Lemma \ref{lemma:ass}, if $B_{k, j}, B_{q, j}$ are not from the original $B$ then they are set artificially to numbers congruent modulo $p$ to 0, $\lceil p/3 \rceil$ or $\lceil 2p/3 \rceil$. In that case, it still holds that $p$ divides $A_{i, k} + B_{k, j} - A_{i, q} - B_{q, j} -r$ with probability $\Tilde{O}(n^{-\alpha})$. To see this we can condition on different cases.
For instance, if $B_{k, j}$ is congruent modulo $p$ to $\lceil p/3 \rceil$ and $B_{q, j}$ is from the original $B$, then we want that $p$ divides $A_{i, k} + \lceil p/3 \rceil - A_{i, q} - B_{q, j} -r$. Since 3 does not divide $p$, $3 \lceil p/3 \rceil$ is $p+1$ or $p+2$. So $p$ divides $3(A_{i, k}  - A_{i, q} - B_{q, j} -r) + 1$ or  $3(A_{i, k}  - A_{i, q} - B_{q, j} -r) + 2$. The probability that this is the case is $\Tilde{O}(n^{-\alpha})$ in both cases. Other cases of $B_{k, j}, B_{q, j}$ are similar. Summing conditional probabilities for all cases, we obtain that $p$ divides $A_{i, k} + B_{k, j} - A_{i, q} - B_{q, j} -r$ with probability $\Tilde{O}(n^{-\alpha})$.

Finally, there are $O(2^l)$ such possible remainders $r$, so the probability that $(i, k, [j_0, j_1])$ appears in $T_b^{(l)}$ is $\Tilde{O}(2^l n^{-\alpha})$. By linearity of expectation, since there are $O(\frac{n^{1 + \beta + \mu}}{2^l})$ segments $(i, k, [j_0, j_1])$ w.r.t.\ $B^{(l)}, C^{(l)}$, the expected number of segments in $T_{b}^{(l)}$ is $\Tilde{O}(n^{1 + \beta  + \mu - \alpha})$.
\end{proof}

\begin{lemma}\label{lemma:bound_Cl}
In each iteration $l = h, \dots, 0$ and for every $i, j \in [n]^2$, we have that $C_{i, j}^{(l)} - 2C_{i, j}^{(l+1)} \in [-7, 8]$.
\end{lemma}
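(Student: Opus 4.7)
The plan is to prove the claim by direct arithmetic from property (1), which by construction holds as an invariant for every $C^{(l)}$ that the algorithm produces. Assume $C_{i,j}$ is finite (otherwise $C^{(l)}_{i,j}, C^{(l+1)}_{i,j}$ are both $+\infty$ and the claim is vacuous after the convention $+\infty - 2\cdot(+\infty)$ is interpreted consistently), and write $x := C_{i,j} \bmod p$.

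By property (1) applied at level $l$, and using $y - 1 \leq \lfloor y \rfloor \leq y$ on both sides, we obtain
\begin{equation*}
C^{(l)}_{i,j} - \tfrac{x}{2^l} \in \bigl[\,-3 + 2^{1-l},\; 2 - 2^{1-l}\bigr].
\end{equation*}
The same inequality at level $l+1$ gives
\begin{equation*}
C^{(l+1)}_{i,j} - \tfrac{x}{2^{l+1}} \in \bigl[\,-3 + 2^{-l},\; 2 - 2^{-l}\bigr],
\end{equation*}
and multiplying by $2$ yields $2C^{(l+1)}_{i,j} - \tfrac{x}{2^l} \in [-6 + 2^{1-l},\; 4 - 2^{1-l}]$. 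Subtracting these two bracketed intervals, the $x/2^l$ terms cancel and we get
\begin{equation*}
C^{(l)}_{i,j} - 2C^{(l+1)}_{i,j} \in \bigl[\,-7 + 2^{2-l},\; 8 - 2^{2-l}\bigr] \subseteq [-7, 8],
\end{equation*}
since $2^{2-l} \geq 0$ for all $l \geq 0$. This is exactly the desired conclusion.

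The only subtle point worth double-checking is the handling of the base case $l = h$, where $C^{(h+1)}$ is not defined; but the lemma is invoked only in the iterative step where we go from $C^{(l+1)}$ (already built) to $C^{(l)}$, so $l$ ranges over $\{0,\dots,h-1\}$ and both matrices exist with property (1). The genuine obstacle would have been if the approximation constants in property (1) were tight enough that the floor losses could push the sum outside $[-7,8]$, but the calculation above shows we have the slack factor $2^{2-l}$ to spare, so no induction or appeal to the algorithm's internal structure (beyond property (1)) is necessary.
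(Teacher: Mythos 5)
Your proof is correct and takes the same route the paper implicitly relies on: the paper discharges this lemma by citing Chi et al.'s Lemma 3.5, whose argument is precisely the arithmetic you perform, namely deducing from property (1) at levels $l$ and $l+1$ that $C^{(l)}_{i,j} - 2C^{(l+1)}_{i,j}$ lies in the interval $[-7 + 2^{2-l},\, 8 - 2^{2-l}] \subseteq [-7,8]$. Your side remarks (the $l=h$ boundary, where one may simply take $C^{(h+1)}$ to be the zero matrix since $C_{i,j}\bmod p < 2^{h+1}$, and the infinite-entry case) are consistent with the algorithm's conventions and close the minor imprecisions in the lemma's stated range of $l$.
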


\begin{proof}
The proof of analogous Lemma 3.5 in \cite{chi2022faster} applies verbatim.
\end{proof}

\begin{lemma}\label{lemma:bigcup}
For every $l=h-1, \dots, 0$, we have that $\bigcup\limits_{b=-10}^{10} T_b^{(l)} \subset \bigcup\limits_{b=-10}^{10} T_b^{(l + 1)}$.
\end{lemma}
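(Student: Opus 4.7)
The plan is to show that every segment $s \in \bigcup_{b=-10}^{10} T_b^{(l)}$ is contained in some segment $s' \in \bigcup_{b=-10}^{10} T_b^{(l+1)}$. Since $X^{(l+1)}_{k,j} = \lfloor X^{(l)}_{k,j}/2 \rfloor$ for $X \in \{A,B\}$, any interval on which $B^{(l)}_{k,\cdot}$ is constant is automatically contained in an interval on which $B^{(l+1)}_{k,\cdot}$ is constant (and the same holds for $C^{(l+1)}_{i,\cdot}$, by property~(2) and Lemma~\ref{lemma:bound_Cl}). So for a segment $s=(i,k,[j_0,j_1])\in T_b^{(l)}$ I would let $s'=(i,k,[j_0',j_1'])$ be the maximal segment w.r.t.\ $B^{(l+1)},C^{(l+1)}$ that contains $[j_0,j_1]$, and then check the three defining conditions of $T_{b'}^{(l+1)}$ for some $b' \in\{-10,\dots,10\}$.

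The finiteness condition $A_{i,k}<+\infty$ is inherited verbatim. The second-condition inheritance $A^*_{i,k}+B^*_{k,j_0'}\neq C^*_{i,j_0'}$ is also easy: by the definition of segment both $B^*_{k,\cdot}$ and $C^*_{i,\cdot}$ are constant on $[j_0',j_1']\supseteq[j_0,j_1]$, so $B^*_{k,j_0'}=B^*_{k,j_0}$ and $C^*_{i,j_0'}=C^*_{i,j_0}$, and the inequality transports from $s$ to $s'$. The substantive part is exhibiting the offset $b'$ and bounding it, which is where I expect the only real arithmetic to happen.

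For that, I would write $A^{(l)}_{i,k}=2A^{(l+1)}_{i,k}+\epsilon_A$, $B^{(l)}_{k,j_0}=2B^{(l+1)}_{k,j_0}+\epsilon_B$ with $\epsilon_A,\epsilon_B\in\{0,1\}$, and $C^{(l)}_{i,j_0}=2C^{(l+1)}_{i,j_0}+\epsilon_C$ with $\epsilon_C\in[-7,8]$ by Lemma~\ref{lemma:bound_Cl}. Substituting into $A^{(l)}_{i,k}+B^{(l)}_{k,j_0}=C^{(l)}_{i,j_0}+b$ and dividing by $2$ yields
\[
A^{(l+1)}_{i,k}+B^{(l+1)}_{k,j_0}-C^{(l+1)}_{i,j_0}=\tfrac{1}{2}\bigl(b+\epsilon_C-\epsilon_A-\epsilon_B\bigr),
\]
which must be an integer $b'$ and, using $b\in[-10,10]$, $\epsilon_C\in[-7,8]$, $\epsilon_A+\epsilon_B\in\{0,1,2\}$, lies in the interval $[-10,10]$. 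Since $B^{(l+1)}_{k,j_0}=B^{(l+1)}_{k,j_0'}$ and $C^{(l+1)}_{i,j_0}=C^{(l+1)}_{i,j_0'}$, this $b'$ also satisfies $A^{(l+1)}_{i,k}+B^{(l+1)}_{k,j_0'}=C^{(l+1)}_{i,j_0'}+b'$, and hence $s'\in T_{b'}^{(l+1)}$.

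The main obstacle is simply verifying that the constant $10$ on the index range is large enough to absorb all of the slack introduced when passing from level $l$ to level $l{+}1$, i.e.\ the three $\epsilon$-terms above. The bookkeeping is straightforward, but the argument crucially relies on Lemma~\ref{lemma:bound_Cl} to keep $\epsilon_C$ in a constant-size window; without that, $b'$ could fall outside $\{-10,\dots,10\}$ and the inclusion would break.
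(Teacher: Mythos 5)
Your key arithmetic is correct and matches the intended argument: write $A^{(l)}_{i,k}=2A^{(l+1)}_{i,k}+\epsilon_A$, $B^{(l)}_{k,j_0}=2B^{(l+1)}_{k,j_0}+\epsilon_B$ with $\epsilon_A,\epsilon_B\in\{0,1\}$, and $C^{(l)}_{i,j_0}=2C^{(l+1)}_{i,j_0}+\epsilon_C$ with $\epsilon_C\in[-7,8]$ by Lemma~\ref{lemma:bound_Cl}; substituting into $A^{(l)}_{i,k}+B^{(l)}_{k,j_0}=C^{(l)}_{i,j_0}+b$ gives $b'=(b+\epsilon_C-\epsilon_A-\epsilon_B)/2$, an integer in $[-9.5,\,9]$ and hence in $\{-10,\dots,10\}$. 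The transport of the conditions $A_{i,k}<+\infty$ and $A^*_{i,k}+B^*_{k,\cdot}\neq C^*_{i,\cdot}$ to the enclosing level-$(l+1)$ segment is also handled correctly, using constancy of $B^*,C^*$ on that segment.

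There is, however, a wrong step in your preliminary containment argument. You assert that an interval on which $C^{(l)}_{i,\cdot}$ is constant is ``automatically contained in an interval on which $C^{(l+1)}_{i,\cdot}$ is constant, by property~(2) and Lemma~\ref{lemma:bound_Cl}.'' That does not follow: unlike $A$ and $B$, the matrix $C^{(l+1)}$ is not $\lfloor C^{(l)}/2\rfloor$ but the algorithm's own approximation, and Lemma~\ref{lemma:bound_Cl} only confines $C^{(l+1)}_{i,j}$ to a constant-size window around $C^{(l)}_{i,j}/2$, while property~(2) only gives monotonicity of $C^{(l+1)}$ on $C^*$-constant intervals. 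Together these still allow $C^{(l+1)}$ to take several distinct values on an interval where $C^{(l)}$ is constant. The containment you need in fact holds \emph{by construction}: the algorithm maintains the segment partition explicitly, producing the level-$l$ segments by splitting each level-$(l+1)$ segment into $O(1)$ pieces (this is what makes the abuse of notation ``$\subset$'' meaningful at all). Replacing your justification for the $C$-constancy claim with that observation closes the gap, and the rest of your argument goes through unchanged.
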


\begin{proof}
The proof of analogous Lemma 3.6 in \cite{chi2022faster} applies verbatim.
\end{proof}

\begin{lemma}
For every $l=h, \dots, 0$, if $A_{i, k} + B_{k, j} = C_{i, j}$, then there is some $b \in \{-10, \dots, 10\}$ such that $A_{i, k}^{(l)} + B_{k, j}^{(l)} = C_{i, j}^{(l)} + b$.
\end{lemma}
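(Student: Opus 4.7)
The plan is to exploit Assumption \ref{ass:p/3} so that no wrap-around modulo $p$ occurs when adding the residues of $A_{i,k}$ and $B_{k,j}$, then combine a simple floor-arithmetic identity with property (1) of $C^{(l)}$.

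First, since $C_{i,j} = A_{i,k} + B_{k,j}$ is finite, both $A_{i,k}$ and $B_{k,j}$ are finite. By \cref{ass:p/3} we have $(A_{i,k} \bmod p) < p/3$ and $(B_{k,j} \bmod p) < p/3$, so their sum is strictly less than $2p/3 < p$. Hence no carry occurs modulo $p$, giving
$$C_{i,j} \bmod p \;=\; (A_{i,k} + B_{k,j}) \bmod p \;=\; (A_{i,k} \bmod p) + (B_{k,j} \bmod p).$$

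Next I would invoke the elementary identity that for any non-negative integers $x,y$,
$$\lfloor x/2^l \rfloor + \lfloor y/2^l \rfloor \in \{\lfloor (x+y)/2^l \rfloor - 1,\ \lfloor (x+y)/2^l \rfloor\},$$
applied to $x = A_{i,k} \bmod p$ and $y = B_{k,j} \bmod p$. By the definition of $A^{(l)}$ and $B^{(l)}$ and the displayed equation above, this yields
$$A^{(l)}_{i,k} + B^{(l)}_{k,j} \;\in\; \bigl\{\lfloor (C_{i,j} \bmod p)/2^l \rfloor - 1,\ \lfloor (C_{i,j} \bmod p)/2^l \rfloor\bigr\}.$$

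Finally, property (1) of $C^{(l)}$ bounds $C^{(l)}_{i,j}$ between $\lfloor (C_{i,j} \bmod p - 2(2^l-1))/2^l \rfloor$ and $\lfloor (C_{i,j} \bmod p + 2(2^l-1))/2^l \rfloor$; since $2(2^l-1)/2^l < 2$, this places $C^{(l)}_{i,j}$ within $\pm 2$ of $\lfloor (C_{i,j} \bmod p)/2^l \rfloor$. Combining this with the previous paragraph gives
$$\bigl|A^{(l)}_{i,k} + B^{(l)}_{k,j} - C^{(l)}_{i,j}\bigr| \;\leq\; 3,$$
so setting $b := A^{(l)}_{i,k} + B^{(l)}_{k,j} - C^{(l)}_{i,j}$ yields a value in $\{-3,\dots,3\} \subset \{-10,\dots,10\}$, as required. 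There is no real obstacle in this argument: the only subtle point is making sure Assumption \ref{ass:p/3} is correctly applied to prevent modular wrap-around, which is precisely the role this assumption plays in the reduction of \cref{lemma:ass}.
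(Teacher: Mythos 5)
Your proof is correct and takes essentially the same approach as \citeauthor{chi2022faster}'s Lemma 3.8, which the paper cites as applying verbatim: \cref{ass:p/3} rules out mod-$p$ wrap-around, near-additivity of $\lfloor\cdot/2^l\rfloor$ places $A^{(l)}_{i,k}+B^{(l)}_{k,j}$ within one of $\lfloor(C_{i,j}\bmod p)/2^l\rfloor$, and property~(1) of $C^{(l)}$ places $C^{(l)}_{i,j}$ within two, giving $|b|\leq 3\leq 10$. The one point worth making explicit is that property~(1) of $C^{(l)}$ is itself established inductively via \cref{lemma:filter}, so the present lemma for index $l$ is available only after $C^{(l)}$ has been constructed and shown to satisfy its invariants.
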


\begin{proof}
The proof of analogous Lemma 3.8 in \cite{chi2022faster} applies verbatim.
\end{proof}

\begin{lemma}\label{lemma:filter}
For every $l= h - 1, \dots, 0$ and $i, j \in [n]^2$, if we set $C_{i, j}^{(l)}$ to the minimum value of $A_{i, k}^{(l)} + B_{k, j}^{(l)}$ such that $k$ satisfies that:
\begin{itemize}
    \item $A^{(l + 1)}_{i, k} + B^{(l + 1)}_{k, j} = C^{(l+1)}_{i, j} + b$ for some $b \in \{-10, \dots, 10\}$
    \item $A_{i, k}^{*} + B^*_{k, j} = C^*_{i, j}$
\end{itemize}
then $C^{(l)}$ satisfies both properties (1) and (2).
\end{lemma}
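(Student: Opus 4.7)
My plan is a downward induction on $l$, from $l=h$ down to $l=0$, assuming that $C^{(l+1)}$ already satisfies (1) and (2) and verifying the same for the $C^{(l)}$ produced by the filtering procedure. The base case $l = h$ is handled already in the algorithm's first-iteration discussion: $C^{(h)}$ is the zero matrix, and both properties hold trivially. The inductive step splits naturally into an upper-bound argument, a lower-bound argument, and a monotonicity argument; the first two jointly give (1) and the third gives (2).

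For property (1), the upper bound is witnessed by an optimal index. Fix $i,j$ with $C_{i,j}$ finite, and let $k^\star$ realize $A_{i,k^\star}+B_{k^\star,j}=C_{i,j}$. By \cref{ass:p/3} we have $(A_{i,k^\star}\bmod p)+(B_{k^\star,j}\bmod p)<2p/3<p$, so adding does not wrap modulo $p$; therefore $A^{*}_{i,k^\star}+B^{*}_{k^\star,j}=C^{*}_{i,j}$, verifying the second filter. The preceding lemma applied at level $l+1$ gives $b\in\{-10,\dots,10\}$ for which $k^\star$ also passes the first filter. The candidate value $A^{(l)}_{i,k^\star}+B^{(l)}_{k^\star,j}$ equals $\lfloor (A_{i,k^\star}\bmod p)/2^l\rfloor+\lfloor(B_{k^\star,j}\bmod p)/2^l\rfloor$, which by the standard floor-sum identity differs from $\lfloor(C_{i,j}\bmod p)/2^l\rfloor$ by $0$ or $1$, comfortably inside the window allowed by (1). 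For the lower bound, let $k$ be any index passing both filters. The second filter together with \cref{ass:p/3} (again, no wrap-around) yields $A_{i,k}+B_{k,j}=pC^{*}_{i,j}+(A_{i,k}\bmod p)+(B_{k,j}\bmod p)$, while the first filter and the inductive bound on $C^{(l+1)}_{i,j}$ locate $A^{(l+1)}_{i,k}+B^{(l+1)}_{k,j}$ within an $O(1)$ window around $\lfloor(C_{i,j}\bmod p)/2^{l+1}\rfloor$; doubling this bound and correcting by the bits $A^{(l)}-2A^{(l+1)}\in\{0,1\}$ and $B^{(l)}-2B^{(l+1)}\in\{0,1\}$ places $A^{(l)}_{i,k}+B^{(l)}_{k,j}$ inside the window of (1). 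Taking the minimum preserves the lower bound.

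For property (2), fix $i$ and a maximal interval $[j_0,j_1]$ on which $C^{*}_{i,\cdot}$ is constant and pick $j_0\le j<j'\le j_1$. The plan is to show that for every $k$ passing both filters at $(i,j)$, the contribution $A^{(l)}_{i,k}+B^{(l)}_{k,j}$ is at most some contribution $A^{(l)}_{i,k'}+B^{(l)}_{k',j'}$ at $(i,j')$, so minima inherit the inequality. The second filter is identical at $j$ and $j'$ since $C^{*}_{i,j}=C^{*}_{i,j'}$; on the block, $B$ is non-decreasing in the column index and $B^{*}$ is constant within each sub-block, so $B\bmod p$ and hence $B^{(l)}_{k,\cdot}$ are non-decreasing in $j$ on every sub-block. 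The first filter depends on $j$ only through $C^{(l+1)}_{i,j}$, which is non-decreasing on $[j_0,j_1]$ by the inductive property (2), together with the $O(1)$-slack parameter $b$; this slack absorbs the small shifts introduced when moving from $j$ to $j'$. Combining these, I can transfer any valid candidate at $j'$ back to a candidate at $j$ of no larger value (or show its direct counterpart remains valid and smaller), establishing $C^{(l)}_{i,j}\le C^{(l)}_{i,j'}$.

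The main obstacle is property (2): unlike (1), which reduces to clean arithmetic with floor functions once the no-wrap-around observation from \cref{ass:p/3} is in place, the monotonicity step requires tracking how the set of valid $k$ co-evolves with $j$ across the constant-$C^{*}$ block and ruling out the scenario where a newly-admitted candidate at $j'$ produces a strictly smaller value than every candidate available at $j$. Resolving this cleanly will likely rely on combining the monotonicity of $B^{(l)}_{k,j}$ in $j$ within the block, the $\{-10,\dots,10\}$ slack of the first filter, and \cref{lemma:bound_Cl} to show that any candidate swap changes the objective by a bounded amount that is already absorbed by $b$.
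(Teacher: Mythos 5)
The paper itself does not spell out a proof of this lemma; it simply defers to Section~3.2.1 of \cite{chi2022faster}, so your attempt is really being measured against that source. Your high-level plan (downward induction; upper bound via an optimal witness $k^\star$ that passes both filters, invoking the analogue of Lemma~3.8; lower bound from the filter constraints; then monotonicity for (2)) is the right shape, and the upper-bound half of (1) is essentially fine: $k^\star$ passes the second filter by the no-wrap observation, passes the first by the unnamed lemma, and the floor-sum identity places $A^{(l)}_{i,k^\star}+B^{(l)}_{k^\star,j}$ at most $\lfloor(C_{i,j}\bmod p)/2^l\rfloor$.

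The lower-bound half of (1), as written, has a real gap. You try to propagate the inductive bound on $C^{(l+1)}_{i,j}$ through the first filter, double, and absorb the correction in the window; but the allowed slack in property~(1) is $\pm 2(2^l-1)$ \emph{in the numerator}, i.e.\ essentially $\pm 2$ after dividing by $2^l$ uniformly in $l$, whereas doubling the $l{+}1$ window and adding the $|b|\le 10$ offset and the $\{0,1,2\}$ bit correction produces an error of order $\pm 20$ at level $l$ that does not fit inside $\pm 2$. The propagation simply does not close. The correct route here uses the \emph{second} filter together with the min-plus definition of $C$: the second filter plus \cref{ass:p/3} gives $(A_{i,k}\bmod p)+(B_{k,j}\bmod p)$ without wrap-around and with the same quotient $C^*_{i,j}$ as $C_{i,j}$; since $A_{i,k}+B_{k,j}\ge C_{i,j}$ for every $k$, this forces $(A_{i,k}\bmod p)+(B_{k,j}\bmod p)\ge C_{i,j}\bmod p$, and then $A^{(l)}_{i,k}+B^{(l)}_{k,j}\ge\lfloor\frac{(C_{i,j}\bmod p)-2(2^l-1)}{2^l}\rfloor$ follows from elementary floor arithmetic with no dependence on $C^{(l+1)}$ at all. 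You mention the consequence of the second filter but do not draw the $A_{i,k}+B_{k,j}\ge C_{i,j}$ comparison that makes the argument work; instead you lean on the first filter, which cannot carry the weight. Finally, for property~(2) you explicitly flag that you have not resolved the monotonicity step, so the proposal does not yet constitute a complete proof of the lemma.
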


\begin{proof}
The proof of correctness of Section 3.2.1 in \cite{chi2022faster} applies verbatim.
\end{proof}

\section{Applications}
\label{sec:applications}
In this section we list problems for which we can improve the bounds on the running time using \cref{thm:fast_rect}. We also improve running time for Unweighted Tree Edit Distance using the algorithm by \citet{chi2022faster} for Square Monotone Min-Plus Product. A summary of the improved bounds is given in \cref{fig:table_applications} but we detail the computations here.

\subsection{Single Source Replacement Paths}
\label{sec:SSRP}

In the SSRP problem, one is given an edge-weighted graph $G=(V, E, w)$ and a source $s \in V$ and has to find for all pairs $(v, e) \in V \times E$ the shortest path from $s$ to $v$ without using the edge $e$.
In \cite{polak}, \citeauthor{polak} use a Monotone Min-Plus Product algorithm to solve the \textbf{$M$-bounded SSRP} problem, which is the SSRP setting with edge-weights in $\{-M, \dots, M\}$
\footnote{As for Min-Plus Product, SSRP was shown in \cite{williams2010subcubic_focs, williams2010subcubic} to be sub-cubically equivalent to APSP. So under the APSP Hypothesis there is no truly sub-cubic time algorithm to solve SSRP in the general case.
\citet{polak} consider bounded weights to achieve a truly sub-cubic algorithm.}.

Their algorithm has a running time of (see \cite[Proof of Lemma 27]{polak}) $$\Tilde{O}(n^{\mu + \omega} + n^{\mu + \zeta + \omega(1-\zeta)} + n^{3 - 2\zeta} + n^{\varphi(1-\zeta, 1 + \mu)}),$$ where $M = n^\mu$ is the bound on the edge-weights, $\zeta\in[0, 1]$ is a parameter to optimize and $\varphi(\beta, \mu)$ is the function returning the smallest number such that the Monotone Min-Plus Product between matrices $A$ and $B$ of dimensions $n \times n^\beta$ and $n^\beta \times n$
with values of $B$ non-negative and at most $O(n^\mu)$ can be computed in $\Tilde O(n^{\varphi(\beta, \mu)})$ time. We recall that the interesting regime is when $0 \leq \mu \leq 3 - \omega$.
Using the bound on $\varphi(\beta, \mu)$ given in \cite{polak}, the running time is expressed as $\Tilde{O}(M^{\frac{5}{17-4\omega}} n^{\frac{36 - 7\omega}{17 - 4\omega}})$ using $\omega$ or as $\Tilde{O}(M^{0.8043}n^{2.4957})$ using known bounds for fast rectangular matrix multiplication.

\cref{thm:fast_rect} yields the tighter bound $\varphi(1-\zeta, 1 + \mu) \leq (3 + \mu - \zeta + \omega(1 - \zeta)) /2$. We can thus upper bound the running time of the algorithm by $\Tilde{O}(n^{\mu + \omega} + n^{\mu + \zeta + \omega(1-\zeta)} + n^{3 - 2\zeta} + n^{\frac{3 + \mu - \zeta + \omega(1 - \zeta)}{2}})$. Let's call $
A := \mu + \omega,\,
B := \mu + \zeta + \omega(1-\zeta),\,
C := 3 - 2\zeta \text{ and}\,
D := (3 + \mu - \zeta + \omega(1-\zeta))/2
$
the exponents of the different terms. We optimize $\zeta$ such that the exponent of $n$ equals $\min_{\zeta \in [0, 1]} \max \{A, B, C, D\}$ and will obtain running times of $\Tilde{O}(M^{\frac{2}{5 - \omega}} n^{\frac{9 - \omega}{5 - \omega}})$ using $\omega$ and $\Tilde{O}(M^{0.8825} n^{2.4466})$ using fast rectangular matrix multiplication. Remark that since $\mu \leq 3 - \omega$ the term $\Tilde O(n^A) = \Tilde O(Mn^\omega)$ is not dominant and thus can be omitted.\\

To express the running time as a function $\omega$, we bound $\omega(1 - \zeta) \leq (1 - \zeta) \cdot \omega + 2\zeta$ by implementing rectangular matrix multiplication by cutting rectangular matrices into square matrices and get:
\begin{align*}
    B &\leq \mu +\zeta  +  (1 - \zeta) \cdot \omega + 2\zeta \\
    &= \mu + \omega + \zeta \cdot (3 - \omega) &=: B' \\
    D &\leq (3 + \mu - \zeta + (1 - \zeta)\cdot \omega + 2\zeta)/2 \\
    &= (3 + \mu + \omega + \zeta \cdot (1 - \omega))/2 &=: D'.
\end{align*}
Now equalizing $B', C$ and $D'$ we find $\zeta = \frac{3 - \mu - \omega}{5 - \omega}$ which is indeed in $[0, 1]$ for $0 \leq \mu \leq 3 - \omega$. Hence the dominant term is $B' = \frac{2\mu + 9-\omega}{5 -\omega}$. This yields an expected running time of $\Tilde{O}(M^{\frac{2}{5 - \omega}} n^{\frac{9 - \omega}{5 - \omega}})$. \\

To improve this bound we can use fast rectangular matrix multiplication and known bounds on $\omega(\beta)$.
Indeed, since the function $\omega(\beta)$ is convex in $\beta$, we can compute an upper bound on $\omega(\beta_0)$ and $\omega(\beta_1)$ for $\beta_0 < \beta_1$ and interpolate an upper bound on $\omega(\beta) \leq \frac{\omega(\beta_0)(\beta_1 - \beta) + \omega(\beta_1)(\beta - \beta_0)}{\beta_1 - \beta_0}$ for values $\beta \in [\beta_0, \beta_1]$.
Let's set $\zeta$ as a linear function of $\mu$, i.e.\ $\zeta = a + b \cdot \mu$ for some real numbers $a, b \neq 0$.
Then by optimizing $\max \{B, C, D\}$ with plugged in $\mu_0=0$ we obtain the optimal value for $\zeta_0 = a$.
When plugging in $\mu_1 = 3 - \omega$ we get the optimal value for $\zeta_1$ and compute $b = \frac{\zeta_1 - \zeta_0}{3 - \omega}$. Using the current best bound on fast matrix multiplication \cite{gall_rect} we get $a = 0.2767$ and $b=-0.4412$.
Set $\beta = 1 - \zeta$ and choose the bounds $\beta_0 = 1 - \zeta_0 = 1 - a = 0.7233$ and $\beta_1 = 1$. We get the following upper bound for $\zeta \in [0, 0.2767]$.
\begin{align*}
    \omega(1 - \zeta) &\leq \frac{\omega(\beta_0)(\beta_1 - (1 - \zeta)) + \omega(\beta_1)(1 - \zeta - \beta_0)}{\beta_1 - \beta_0} \\
    &= \frac{\omega(\beta_0)(a + b \cdot \mu) + \omega(\beta_1)(-b \cdot \mu)}{a} \\
    &= \omega(\beta_0) + \frac{b}{a} (\omega(\beta_0) - \omega(\beta_1)) \mu
\end{align*}
Using the values of $a$ and $b$ previously computed and the bounds on fast matrix multiplication \cite{gall_square} and fast rectangular matrix multiplication \cite{gall_rect}, we obtain
$\omega(1-\zeta) \leq 2.1698 + 0.3237 \mu.$
Finally, if we use this bound to optimize $\max \{B, C, D\}$,
we get a running time of $\Tilde{O}(M^{0.8825}n^{2.4466})$.

\paragraph{Lower bound}
In \cite{polak}, \citeauthor{polak} have an interesting discussion about hardness of SSRP with negative weights as opposed to SSRP with positive weights. \citet{grandoni_william_ssrp} showed how to solve SSRP with positive weights in $\Tilde{O}(Mn^\omega)$ time, where $M$ is a bound on the weights. Subsequently this means that for weights in $\{0, 1\}$, we can solve SSRP in $\Tilde{O}(n^\omega)$ time. \citeauthor{polak} show that SSRP with weights in $\{-1, 0, 1\}$ seems harder than SSRP with weights in $\{0, 1\}$. They prove that if there exists a $T(n)$ time algorithm for SSRP in an $n$-vertex graphs with edge-weights in $\{-1, 0, 1\}$, then there exists an $O(T(n)\sqrt{n})$ time algorithm for the Bounded-Difference Min-Plus Product of $n \times n$ matrices. So if SSRP with $\{-1, 0, 1\}$ weights can be solved in $\Tilde{O}(n^2)$ time (if $\omega = 2$), then the Bounded-Difference Min-Plus Product could be solved in $\Tilde{O}(n^{2.5})$. The existence of the second algorithm seemed unlikely at the time of publication of \cite{polak}. However the new algorithm by \citet{chi2022faster} yields an $\Tilde {O}(n^{2.5})$ time algorithm if $\omega = 2$. Hence the above reasoning to lower bound the running time of SSRP with negative weights by $\Tilde{O}(n^2)$ (if $\omega = 2$) does not hold anymore.

\subsection{Batch Range Mode}

Let $a$ be an array of elements. A range mode query asks for the most frequent element in a given contiguous interval of $a$. In the \textbf{Batch Range Mode} problem, the array $a$ and all queries are given in advance, and the task is to compute all query answers.

The Batch Range Mode problem is solved by \citet{batch_range_mode} via a black box computation of a Monotone Min-Plus Product in time $\Tilde{O}(n^{\frac{27+2\omega}{19+\omega}})$. 
% Their deterministic algorithm runs in $\Tilde{O}(n^{ 1 + \tau} + n^{(1 - \tau) \cdot \varphi(1, \frac{1}{1 - \tau})})$ time (see \cite[Proof of Theorem 4]{polak}), where $\tau$ is a parameter in $[0, 1]$ and $\varphi(\beta, \mu)$ is the function returning the smallest number such that the Monotone Min-Plus Product between matrices $A$ and $B$ of dimensions $n\times n^\beta$ and $n^\beta \times n$ with values of $B$ non-negative and at most $O(n^\mu)$ can be computed in $\Tilde O(n^{\varphi(\beta, \mu)})$ time.
% 
\citet{polak} bound this running time by plugging in their algorithm for Monotone Min-Plus Product and obtained a running time $\Tilde O(n^\frac{21 + 2\omega}{15 + \omega})$. \citet{Gao_He_batch} showed a faster algorithm running in time $\Tilde O(n^\frac{18+2\omega}{13 + \omega})$. 
% By using the algorithm of \citet{batch_range_mode} with the Monotone Min-Plus Product algorithm of \cref{thm:fast_rect} we don't directly improve the running time. However, with a slight modification of the algorithm, we are 
% improve the bound even further. 
% By a small modification of the algorithm of \citet{batch_range_mode}, we are able to leverage the Monotone Min-Plus Product algorithm of \cref{thm:fast_rect} and improve the bound even further. 
% 
To improve the bound even further, we add a small modification to the algorithm of \citet{batch_range_mode} and use the (square) Monotone Min-Plus Product algorithm of \citet{chi2022faster}.

% it is not enough to simply use the Monotone Min-Plus Product algorithm of \cref{thm:fast_rect} in the algorithm of \citet{batch_range_mode}: this yields a running time of $\Tilde O(n^{\frac{5+2\omega}{4+\omega}})$. To get a faster running time, we need a small modification to the algorithm of \citet{batch_range_mode}.
We first sketch the algorithm of \citet{batch_range_mode}. For some parameter $T \leq n$, denote an element of $a$ as \emph{infrequent} if it appears at most $T$ times, and \emph{frequent} if it appears more than $T$ times. 
A persistent binary search tree is used to handle range queries whose answer is an infrequent element. This takes time $\Tilde O(nT)$ for all queries. 
To handle range queries whose answer is a frequent element, split the array $a$ into $O(n/T)$ consecutive blocks of size $O(T)$.
For any element of $a$ in a given range, there are now two cases: either the element lies in a block fully contained in the range (case 1), or it lies in a block that is not fully contained in the range (case 2). Since blocks have length $O(T)$, there are at most $O(T)$ elements in case 2. So by using binary search trees we can answer range queries whose answer is in case 2 in time $\Tilde O(T)$ per query, i.e.~time $\Tilde O(nT)$ overall. 
To answer all range queries whose answer is in case 1, it suffices to compute the Max-Plus Product between two matrices $A$ and $B$, whose construction is detailed in \citet{batch_range_mode}. 
% Finally, \citet{batch_range_mode} construct two matrices $A$ and $B$ such that their Max-Plus Product contains the answers to the range queries in case 1. 
Importantly for the current analysis, both $A$ and $B$ are $O(n/T)$ by $O(n/T)$ matrices and $B$ is row-monotone. Additionally, the values of $A$ and $B$ is the number of occurrences of frequent elements over certain ranges, so the entries of $A$ and $B$ are at most $n$. 

In total, the algorithm of \citet{batch_range_mode} runs in time $\Tilde O(n^{1 + \tau} + n^{(1-\tau) \varphi(1, \frac{1}{1-\tau})})$, where $\tau \in (0, 1)$ is such that $T = n^\tau$ and $\varphi(\beta, \mu)$ is the function returning the smallest number such that the Monotone Min-Plus Product between matrices $A$ and $B$ of dimensions $n\times n^\beta$ and $n^\beta \times n$ with values of $B$ non-negative and at most $O(n^\mu)$ can be computed in $\Tilde O(n^{\varphi(\beta, \mu)})$ time.
Using the bound of \cref{thm:fast_rect}, this yields running time $\Tilde O(n^{\frac{5+2\omega}{4+\omega}})$, which is worse than the bound achieved by \citet{Gao_He_batch}. However, consider the following modification. 

Differentiate between \emph{infrequent}, \emph{frequent} and \emph{highly frequent} elements as elements of $a$ that appear at most $T$ times, more than $T$ but at most $n/T$ times, and more than $n/T$ times respectively. Then the infrequent and frequent elements can be handled in the same way as in \citet{batch_range_mode}. Since frequent elements now appear at most $n/T$ times, the entries of the constructed matrices $A$ and $B$ is at most $n/T$. 
Finally, there are at most $T$ highly frequent items. So, similarly to frequent elements in case 2, we can use binary search trees to handle highly frequent elements in time $\Tilde O(T)$ per query, i.e.~$\Tilde O(nT)$ time overall. 

In total, this yields a running time $\Tilde O(n^{1+\tau} + n^{(1-\tau) \varphi(1, 1)})$. Using the bound $\varphi(1, 1) \leq \frac{3+\omega}{2}$ of \citet{chi2022faster}, and setting $\tau = \frac{1+\omega}{5+\omega}$, the running time becomes $\Tilde O(n^{\frac{6+2\omega}{5+\omega}})$. 
With current best bound on fast matrix multiplication \cite{gall_square} this is $\Tilde{O}(n^{1.4575})$.

\subsection{$k$-Dyck Edit Distance}

The Dyck Edit Distance of a sequence $S$ of parenthesis (of various types) is the smallest number of edit operations (insertions, deletions, and substitutions) needed to transform $S$ into a sequence of balanced opening and closing parenthesis. \citet{dyck_arxiv} study the \textbf{$k$-Dyck Edit Distance} problem in which the input is an $n$-length sequence of parenthesis $S$ and a positive integer $k$ and the task is to compute the Dyck Edit Distance of $S$ if it is at most $k$, otherwise to return $k+1$.

\citeauthor{dyck_arxiv} show how to solve it in time $\Tilde O(n + k^2 \cdot {\psi(k, \sqrt{k})})$ (see \cite[Theorem 1.1]{dyck_arxiv}), where $\psi(a, b)$ is the running time to compute the Min-Plus Product between two bounded-difference matrices of dimensions $a \times b$ and $b \times a$.
To bound $\psi(a, b)$, \citeauthor{dyck_arxiv} use the algorithm for Bounded-Difference Min-Plus Product in \citet{rna_led_osg} and get a running time of $\Tilde O(n+k^{4.7820})$.

Our new result of \cref{thm:fast_rect} yields a tighter bound of $\psi(n, n^\beta) \leq \Tilde O(n^\frac{2 + \beta + \omega(\beta)}{2})$. Hence the running time improves to $\Tilde O(n + k^{\frac{5}{2} + \omega(1/2)})$, which is $\Tilde O(n + k^{4.5442})$ with the current best bound on fast rectangular matrix multiplication \cite{gall_rect}. Actually if $k \geq \sqrt{n}$, \citeauthor{dyck_arxiv} show a faster algorithm running in time $\Tilde O(\psi(n, k))$ (see \cite[Section 4]{dyck_arxiv}), which we can directly bound to $\Tilde O(n^{\frac{2 + \log_n k + \omega(\log_n k)}{2}}) = \Tilde O(n \sqrt{k n^{\omega(\log_n k)}})$.

The same improved bound was obtained independently and in parallel by \citet{dyck_soda}.

\subsection{2-approximation APSP}

The All Pairs Shortest Path (APSP) problem asks for a given graph $G = (V, E)$ the shortest path between all pairs of vertices. \citet{apsp_approx} consider a \textbf{2-approximation to APSP} for undirected unweighted graphs, which is a mapping $\hat d: V \times V \rightarrow \mathbb{N}$  such that $d_{OPT}(u, v) \leq \hat d(u, v) + 2$ for every vertices $u$ and $v$, where $d_{OPT}(u, v)$ is the shortest path from $u$ to $v$.

The 2-approximation to APSP is computed by \citeauthor{apsp_approx} by using Bounded-Difference Min-Plus Product as a black-box. When $n = |V|$, the running time of their algorithm is $\Tilde O(n^2 t^{1/2} + \psi(n, n/t))$ (see \cite[Theorem 11]{apsp_approx}), where $t \geq 1$ is a parameter and $\psi(a, b)$ is the running time to compute the Min-Plus Product between two bounded-difference matrices of dimensions $a \times b$ and $b \times a$. \citeauthor{apsp_approx} use the algorithm for Bounded-Difference Min-Plus Product in \citet{rna_led_osg} and get a running time of $\Tilde O(n^{2.2867})$. With \cref{thm:fast_rect} we have an improved upper bound on $\psi$ and can thus compute a 2-approximation of APSP in $\Tilde O(n^2 t^{1/2} + n^{\frac{2 + \log_n(n/t) + \omega(\log_n(n/t))}{2}})$ time. By setting $t = n^{0.5185}$, the running time becomes $\Tilde O(n^{2.2593})$ using current best bounds on fast rectangular matrix multiplication \cite{gall_rect}.

\subsection{Unweighted Tree Edit Distance}

Given an alphabet $\Sigma$ and a rooted ordered tree with node labels in $\Sigma$, we consider two types of operations: relabeling a node to another symbol of $\Sigma$ and deleting a node $v$ such that the children of $v$ become children of the parent of $v$ with the same ordering.
The \textbf{Unweighted Tree Edit Distance} between two rooted ordered trees $T_1$ and $T_2$ is the minimum number of operations to perform on $T_1$ and $T_2$ so that they become identical
\footnote{In the weighted case, the cost of deletions and relabeling is a function of the symbol on the nodes. \citet{weighted_ted_soda, weighted_ted} show that the Weighted Tree Edit Distance does not admit any truly subcubic algorithm under the APSP Hypothesis, if $T_1$ and $T_2$ have both $n$ nodes and $|\Sigma|=\Theta(n)$.
}.

In \cite{treeED}, \citeauthor{treeED} shows the first truly sub-cubic algorithm to solve the Unweighted Tree Edit Distance. For two trees of size $n$ and $m$, the simplified expression of the running time is $\Tilde O(n \cdot m^\frac{3\alpha -1}{\alpha +1})$ (see \ref{app:ted}), where $\alpha$ is the exponent of the Bounded-Difference Min-Plus Product between square matrices. Using the algorithm for Bounded-Difference Min-Plus Product of \citet{rna_led_osg}, the running time can be bounded by $\Tilde O(n \cdot m^{1.9541})$ time, which is subcubic if $m=O(n)$. We remark here that the running time $\Tilde O(n \cdot m^{1.9546})$ announced in \cite{treeED} is slightly worse due to a numerical error.

Since a $n \times n$ bounded-difference matrix can be transformed into a row-monotone matrix with non-negative entries bounded by $O(n)$, the work of \citet{chi2022faster} allows us to get a tighter bound on $\alpha \leq \frac{3 + \omega}{2}$. We can thus bound the running time by $\Tilde O(n \cdot m^\frac{3\omega + 7}{\omega + 5})$. With the current best bound on fast matrix multiplication \cite{gall_square}, this is bounded by $\Tilde O(n \cdot m^{1.9149})$.

\section*{Acknowledgment}
I would like to thank Adam Polak for introducing the topic to me and for instructive guidance. I'm also thankful to Ovidiu Rața for pointing out an error in the first version of this paper concerning the Batch Range Mode problem.

\bibliography{ref.bib}

\appendix

\section{Simplified Running Time for Unweighted Tree Edit Distance}
\label{app:ted}

We explain here how the running time of the algorithm for Unweighted Tree Edit Distance of \citet{treeED} can be expressed as a function of the running time for Bounded-Difference Min-Plus Product.

We use the notations of the article \cite{treeED_arxiv} without redefining them. In \cite[Section 4.2.3]{treeED_arxiv}, the running time of \cite[Algorithm 2]{treeED_arxiv} is shown to be $\Tilde O(\frac{n}{\Delta}\text{MUL}(\Delta, m) + n \cdot m \cdot \Delta^2)$, with $\Delta$ a parameter to optimize, $n = |T_1|$, $m = |T_2|$ and $\text{MUL}(m, n)$ is the function defined in \cite[Theorem 4.5]{treeED_arxiv}. This theorem gives a (numerical) bound on $\text{MUL}(m, n)$ which is proven in \cite[Section 4.4, Algorithm 6]{treeED_arxiv}. The running time of \cite[Algorithm 6]{treeED_arxiv} is $\Tilde O(n^2 \cdot m^2 / \delta)$ for the recursive calls and $\Tilde O(n^2 \cdot \delta^{\alpha - 2})$ for the final part, where $\alpha$ is the exponent for Bounded-Difference Min-Plus Product and $\delta$ is a parameter to optimize
\footnote{In \cite[Section 4.4]{treeED_arxiv} the parameter is called $\Delta$, but we rename it to $\delta$ to avoid confusion with the $\Delta$ parameter of the running time of \cite[Algorithm 2]{treeED_arxiv}.}.
Here, we do not replace $\alpha$ by the bound given by \citet{rna_led_osg} as done in \cite{treeED_arxiv}, but directly bound $\text{MUL}(m, n)$ by $\Tilde O(n^2 \cdot m^2 / \delta + n^2 \delta^{\alpha - 2})$. To equalize both terms we set $\delta = m^\frac{2}{\alpha - 1}$ and get the bound $\text{MUL}(m, n) = \Tilde O(n^2 \cdot m^\frac{2\alpha -4}{\alpha -1})$.
We can now replace $\text{MUL}(\Delta, m)$ in the total running time of \cite[Algorithm 2]{treeED_arxiv} and obtain $\Tilde O(n \cdot m^2 \cdot \Delta^\frac{\alpha - 3}{\alpha -1} + n \cdot m \cdot \Delta^2)$. We equalize both terms by setting $\Delta = m^\frac{\alpha -1}{\alpha +1}$ and get the following expression of the running time for Unweighted Tree Edit Distance: $\Tilde O(n \cdot m^\frac{3\alpha -1}{\alpha +1})$.

One can verify that we obtain the same result as \cite[Theorem 4.5]{treeED_arxiv} and \cite[Theorem 1.1]{treeED_arxiv} if we replace $\alpha$ in the final expression by the numerical bound given by  \citet{rna_led_osg}: $\alpha \leq 2.8244$ in the randomized algorithm and $\alpha \leq 2.8603$ for the deterministic case\footnote{This is up to a small numerical error done by \cite{treeED_arxiv}. In fact when applying \cite[Theorem 4.5]{treeED_arxiv} in \cite[Section 4.2.3]{treeED_arxiv}, the exponent of $\Delta$ should be 0.0962 and not 0.0952.}.

\section{Min-Plus Product when B is column-monotone}
\label{sec:BMMP_col}
We want to compute the Min-Plus Product $C = A \star B$ of the  $n^\alpha \times n$ matrix $A$ and the $n \times n^\gamma$ matrix $B$ where the columns of $B$ are monotonously non-decreasing and entries of $B$ are positive and bounded by $O(n^\mu)$. Without detailing every computation again, we explain why $C$ can be computed in $\Tilde O(n^{\frac{\omega(\alpha, 1, \gamma) + \alpha + \gamma + \mu}{2}})$ time, where $\omega(a, b, c)$ is the lowest exponent of the multiplication of matrices of dimensions $n^a \times n^b$ and $n^b \times n^c$.
If matrices $A$ and $B$ have dimension $n \times n^\beta$ and $n^\beta \times n$ respectively, the running time becomes $\Tilde O(n^{\frac{\omega(\beta) + 2 + \mu}{2}})$. Note that this is different from the $\Tilde O(n^{\frac{\omega(\beta) + 1 + \beta + \mu}{2}})$ time algorithm for the Min-Plus Product when $B$ is row-monotone. Intuitively, the algorithm iterates over the rows of $A$, over the rows of $B$ (which corresponds to the columns of $A$) and the columns of $B$. When $B$ is row-monotone, iterating over the columns of $B$ can be reduced to a number of steps depending on $\mu$ and iterating over the rows of $A$ and the rows of $B$ takes $O(n \cdot n^\beta)$ time. However when $B$ is column-monotone, the acceleration happens when iterating over the rows of $B$ and hence iterating over the two other ranges takes $O(n \cdot n)$ time.

\paragraph{Quotient matrix}
The algorithm starts by sampling a prime number $p$ from the range $[40n^\beta, 80n^\beta]$ uniformly at random, where $\beta \in (0, \mu)$ is some parameter we tune later. Then we can still assume \cref{ass:p/3}. We first compute the quotient matrix $\Tilde C = \Tilde A \star \Tilde B$ where $\Tilde A_{ij} = \left\lfloor \frac{A_{ij}}{p} \right\rfloor$ if $A_{ij}$ is finite, otherwise $\Tilde A_{ij} = + \infty$, and similarly for $\Tilde B$. The entries of $\Tilde B$ are bounded by $O(n^{\mu - \beta})$ so since the number of intervals of $+\infty$ values in each column of $\Tilde B$ is bounded by $O(n^{\mu - \beta})$, there are $O(n^{\mu - \beta})$ intervals of the same value in each column of $\Tilde B$. We can thus compute $\Tilde C$ using a segment tree structure by iterating over all $i \in [n^\alpha]$, over all $j \in [n^\gamma]$ and over all intervals $[k_0, k_1]$ of the $j$th column of $\Tilde B$. This computation takes $O(n^{\alpha + \gamma + \mu - \beta})$ time. By \cref{ass:p/3}, $\Tilde C_{ij} = \left\lfloor \frac{C_{ij}}{p} \right\rfloor$ if $C_{ij}$ is finite, otherwise $\Tilde C_{ij} = +\infty$.

\paragraph{Remainder matrix}
We then compute the remainder matrix $(C \mod p)$ recursively. Let $h$ be the integer such that $2^{h - 1} \leq p < 2^h$ and define for $\ell = 0, 1, \dots, h$ matrices $A^{(\ell)}$ and $B^{(\ell)}$ where $A_{ij}^\ell = \left\lfloor \frac{A_{ij} \mod p}{2^\ell} \right\rfloor$ if $A_{ij}$ is finite, otherwise $A_{ij}^\ell = +\infty$, and similarly for $B^{(\ell)}$. Observe that the values of $A^{(\ell)}$ and $B^{(\ell)}$ are bounded by $O(n^\beta)$ and that each column of $B^{(\ell)}$ contains at most $O(n^\mu / 2^\ell)$ intervals of the same value.
We define a segment w.r.t.\ $\ell$  to be a tuple $(i, [k_0, k_1], j)$ such that for any $k \in [k_0, k_1]$ $A_{ik}^{(\ell)} = A_{ik_0}^{(\ell)}$, $\Tilde A_{ik} = \Tilde A_{ik_0}$, $B^{(\ell)}_{kj}= B^{(\ell)}_{k_0j}$ and $\Tilde B_{kj} = \Tilde B_{k_0}j$. For a fixed $i$ and fixed $j$, there are $O(n^\mu / 2^\ell)$ segments. Next define the set $T_b^{(\ell)}$ to be the set of segments w.r.t.\ $\ell$ such that $A^{(\ell)}_{ik_0} + B^{(\ell)}_{k_0j} = C^{(\ell)}_{ij} + b$ but $\Tilde A_{ik_0} + \Tilde B_{k_0j} \neq \Tilde C_{ij}$ for some offset $b \in [-10, 10]$.
We recursively compute the matrix $C^{(\ell)}$ where $C^{(\ell)}_{ij} = \left\lfloor \frac{(C_{ij} \mod p)\  \pm \  2(2^\ell - 1)}{2^\ell} \right\rfloor$ for decreasing values of $\ell = h, h-1, \dots, 0$.

\subparagraph{Base case}
In the base case $\ell = h$, since $p < 2^h$ all matrices $A^{(h)}$, $B^{(h)}$ and $C^{(h)}$ have zero entries. $T_0^{(h)}$ is thus the set of all segments w.r.t.\ $h$ $(i, [k_0, k_1], j)$ such that $\Tilde A_{ik_0} + \Tilde B_{k_0j} \neq \Tilde C_{ij}$ and $T_b^{(h)}$ is the empty set for $b\neq 0$ . $T_0^{(h)}$ can be computed in $O(n^{\alpha + \gamma + \mu - \beta})$ time by iterating over all the $i \in [n^\alpha]$, all the $j \in [n^\gamma]$ and all the intervals of the $j$th column of $\Tilde B$.

\subparagraph{Recursive case}
Each recursive step consists of computing  $C^{(\ell)}$ from $T_b^{(\ell + 1)}$ and then computing $T_b^{(\ell)}$ from $T_b^{(\ell + 1)}$.
To compute $C^{(\ell)}$ from $T_b^{(\ell + 1)}$ we first use fast matrix multiplication on matrices of polynomials, then we subtract the erroneous terms and finally look for the lowest value. The matrices we multiply are $A^p$ and $B^p$ where $A_{ik}^p = x^{A_{ik}^{(\ell)} - 2A_{ik}^{(\ell + 1)}} y^{A_{ik}^{(\ell)}}$ if $A_{ik}$ is finite and $0$ otherwise, and similarly for $B^p$. The $x$-degree is 0 or 1 and the $y$-degree is bounded by $O(n^\beta)$. Computing $C^p = A^p \times B^p$ thus takes $\Tilde O(n^{\omega(\alpha,1,\gamma) + \beta})$ time. We then compute a polynomial of candidate terms $C_{ijb}^p = \sum_{\substack{\lambda x^c y^d \in C^p \\ d = C_{ij}^{(\ell + 1)} + b}} \lambda x^c$ for every $i \in [n^\alpha]$, every $j \in [n^\gamma]$ and every $b \in [-10, 10]$ . Since $C_{ij}^{(\ell + 1)}$ can take up to $O(n^\beta)$ values, this takes $O(n^{\alpha + \beta + \gamma})$ time. Finally we use $T_b^{(\ell + 1)}$ to compute the set of erroneous terms $R_{ijb}^p = \sum_{(i, [k_0, k_1], j) \in T_b^{(\ell + 1)}} x^{A_{ik}^{(\ell)} + B_{kj}^{(\ell)} - 2(A_{ik}^{(\ell + 1)} + B_{kj}^{(\ell + 1)})}$. This takes time $O(|T_b^{(\ell + 1)}|)$.
It still holds that $\bigcup\limits_{b = -10}^{10} T_b^{(\ell)} = \bigcup\limits_{b=-10}^{10} T_b^{(\ell + 1)}$ and since each segment in $T_b^{(\ell + 1)}$ breaks into a constant number of segments of $T_{b'}^{(\ell)}$ (for $b' \in [-10, 10]$), the algorithm simply needs to sort the sub-segments in the correct set, which can be done using binary search. Therefore computing all $T_b^{(\ell)}$ from $T_b^{(\ell + 1)}$ can be done in $O(|T_b^{(\ell + 1)}|)$.

\paragraph{Total runtime}
Finally, we need to bound the size of $T_b^{(\ell)}$. We can use the same argument as previously using a prime number theorem. A fixed segment such that $\Tilde A_{ik_0} + \Tilde B_{k_0j} \neq \Tilde C_{ij}$ is in $T_b^{(\ell)}$ with probability $\Tilde O(2^\ell/n^\beta)$ and there are $O(n^{\alpha + \gamma + \mu}/ 2^\ell)$ such segments so we can bound the expected size of $T_b^{(\ell)}$ by $\mathbb{E}_p (|T_b^{(\ell)}|) = \Tilde O(n^{\alpha + \gamma +\mu -\beta})$.
In the end the total running time is $\Tilde O(n^{\omega(\alpha, 1, \gamma) + \beta} + n^{\alpha + \gamma + \mu - \beta} + n^{\alpha + \gamma + \beta})$. Optimizing over $\beta$ this yields $\Tilde O(n^{\frac{\omega(\alpha, 1, \gamma) + \alpha + \gamma + \mu}{2}})$.

\end{document}